\DeclareMathOperator{\uf}{uf}
\DeclareMathOperator{\su}{su}
\DeclareMathOperator{\ufp}{ufp}
\DeclareMathOperator{\ufs}{ufs}
\title{Factorization in Formal Languages}
\author{Paul Bell\inst{1} \and
Daniel Reidenbach\inst{1} \and
Jeffrey Shallit\inst{2}}
\institute{Department of Computer Science,
Loughborough University,
Loughborough,
Leicestershire,
LE11 3TU, United Kingdom\\
\email{P.Bell@lboro.ac.uk}\\
\email{D.Reidenbach@lboro.ac.uk} \\
\and
School of Computer Science,
University of Waterloo,
Waterloo, ON  N2L 3G1, Canada\\
\email{shallit@cs.uwaterloo.ca}}
\begin{document}

\maketitle

\begin{abstract}
We consider several novel aspects of unique factorization in formal languages.
We reprove the familiar fact
that the set $\uf(L)$ of words having unique factorization into elements
of $L$ is regular if $L$ is regular, and from this deduce an 
quadratic upper and lower bound on the
length of the shortest word not in $\uf(L)$.  We observe that $\uf(L)$
need not be context-free if $L$
is context-free.  

Next, we consider variations on unique factorization.
We define a notion of ``semi-unique'' factorization, where every
factorization has the same number of terms, and
show that, if $L$ is regular or even finite, the set of words having such a 
factorization need not be context-free.  Finally, we consider additional
variations, such as unique factorization ``up to permutation'' and
``up to subset''.
\end{abstract}

\section{Introduction}

Let $L$ be a formal language.  We say $x \in L^*$ has
{\it unique factorization} if whenever
$$x = y_1 y_2 \cdots y_m = z_1 z_2 \cdots z_n$$
for $y_1, y_2, \ldots, y_m, z_1, z_2,\ldots, z_n \in L$
then $m = n$ and $y_i = z_i$ for $1 \leq i \leq m$.
If every element of $L^*$ has unique factorization into elements of $L$, then
$L$ is called a {\it code}.

Although codes have been studied extensively (see, for example,
\cite{Berstel&Perrin&Reutenauer:2010}), in this paper we look
at some novel aspects of unique factorization.

\section{Unique factorizations}

Given $L$, we define $\uf(L)$ to be the set of all elements of $L^*$ having
unique factorization into elements of $L$.
We recall the following familiar fact:

\begin{proposition}
If $L$ is regular, then so is $\uf(L)$.
\label{prop1}
\end{proposition}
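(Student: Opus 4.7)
The plan is to express $\uf(L) = L^* \setminus D$, where $D$ is the set of words admitting at least two distinct factorizations into elements of $L$, and to show that $D$ is regular via a product construction on a suitable NFA for $L^*$.

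Starting from a DFA $M = (Q, \Sigma, \delta, q_0, F)$ for $L$, I would first build an $\varepsilon$-NFA $N$ for $L^*$ in the standard way: keep the transitions of $M$, make $q_0$ the unique accepting state, and add an $\varepsilon$-transition from each $f \in F$ back to $q_0$. The point of this construction is that a factorization $x = w_1 w_2 \cdots w_k$ of $x$ into elements of $L$ corresponds to the accepting computation of $N$ that processes each $w_i$ by the transitions of $M$ and takes one $\varepsilon$-transition between each consecutive pair of factors; conversely every accepting computation of $N$ on $x$ arises in this way. So the accepting computations of $N$ on $x$ are in bijection with the factorizations of $x$ into elements of $L$.

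Next, I would build a product automaton $B$ whose state space is $Q \times Q \times \{0,1\}$, with initial state $(q_0, q_0, 0)$ and accepting set $\{(q_0, q_0, 1)\}$; one step of $B$ simulates one step in each copy of $N$ (with asynchronous handling of $\varepsilon$-moves so that at each point either both copies read a symbol or one copy performs an $\varepsilon$-transition while the other stays put), and the bit flips from $0$ to $1$ the first time the two copies use distinct transitions. Then $B$ accepts $x$ exactly when $N$ has two distinct accepting computations on $x$, which by the bijection above happens exactly when $x$ has at least two factorizations into elements of $L$. Hence $D = L(B)$ is regular, and $\uf(L) = L^* \setminus L(B)$ is regular as the difference of two regular languages.

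The main obstacle will be the bookkeeping in the product construction. I will need to argue carefully that the divergence bit detects any difference between the two computations — in particular, that two distinct accepting computations of $N$ must at some step use distinct transitions, not merely arrive at the same states via different histories — and that the asynchronous simulation of $\varepsilon$-moves neither undercounts nor overcounts computations. Once this is set up correctly, the rest is routine closure of regular languages under intersection, complement, and set difference.
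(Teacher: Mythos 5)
Your overall strategy---complementing within $L^*$ and detecting two distinct runs of an NFA for $L^*$ by a product construction---is the same as the paper's. But there is a genuine flaw in your construction of $N$: you make $q_0$ the \emph{unique} accepting state after adding $\varepsilon$-transitions from $F$ back to $q_0$. With that choice, the claimed bijection between accepting computations of $N$ on $x$ and factorizations of $x$ fails whenever the DFA for $L$ has a letter-transition into $q_0$, because a run may end at $q_0$ by reading a letter rather than by completing a factor. Concretely, take $L = a^*b(a+\varepsilon) = \{b, ba, ab, aba, aab, \ldots\}$, whose minimal DFA has $\delta(q_0,a)=q_0$, $\delta(q_0,b)=q_1$, $\delta(q_1,a)=q_2$ with $F=\{q_1,q_2\}$. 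The word $ba$ has the unique factorization $(ba)$, so $ba \in \uf(L)$; yet your $N$ has two distinct accepting runs on $ba$, namely $q_0 \xrightarrow{b} q_1 \xrightarrow{a} q_2 \xrightarrow{\varepsilon} q_0$ and the spurious $q_0 \xrightarrow{b} q_1 \xrightarrow{\varepsilon} q_0 \xrightarrow{a} q_0$ (the latter ``claims'' the non-factorization $b \cdot a$ with $a \notin L$). Hence your $B$ puts $ba$ into $D$ and your formula $L^* \setminus L(B)$ wrongly excludes it from $\uf(L)$. The fix is standard: keep $F$ (not $\{q_0\}$) as the accepting set of $N$, so that every accepting run ends by completing a word of $L$; then each block between consecutive $\varepsilon$-moves is read from $q_0$ to a state of $F$ and the bijection with factorizations does hold (when $\epsilon \notin L$).

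Two smaller points. First, you never treat the case $\epsilon \in L$, which the paper disposes of separately ($\uf(L)=\emptyset$ there); your $N$ would then have an $\varepsilon$-self-loop at $q_0$ and the ``bijection'' becomes meaningless, so this case should be split off at the start. Second, in the product automaton you should make explicit that the divergence bit also flips when one copy takes an $\varepsilon$-move while the other does not: since the letter-transitions inherited from the DFA are deterministic, two runs of $N$ on the same input differ \emph{exactly} in where they place their $\varepsilon$-moves (i.e., the factor boundaries), so this is in fact the only kind of divergence that can occur, and it is precisely what the paper's transition to the ``doubled'' state $[\delta(q_0,a),\delta(q,a)]$ records. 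With these repairs your argument becomes essentially the paper's proof.
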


\begin{proof}
If $L$ contains the empty word $\epsilon$ then no elements of
$L^*$ have unique factorization, and so $\uf(L) = \emptyset$.  So,
without loss of generality we can assume $\epsilon \not\in L$.

To prove the result, we show that the 
relative complement $L^* - \uf(L)$ is regular.
Let $L$ be accepted by a DFA $M$.
On input $x \in L^*$, we build an NFA $M'$ to guess two different
factorizations of $x$ and verify they are different.
The machine $M'$ maintains the single state of the DFA $M$ for $L$ as it
scans the elements of $x$,
until $M'$ reaches a final state $q$.
At this point $M'$ moves, via an $\epsilon$-transition, to a new 
kind of state that records pairs.
Transitions on these ``doubled" states still follow $M$'s
transition function in both coordinates,
with the exception that if either state is in
$F$, we allow a ``reset" implicitly to $q_0$.  Each implicit return
to $q_0$ marks, in a factorization, the end of a term.
The final states of $M'$ are the ``doubled" states
with both elements in $F$.  

More precisely, assume $M = (Q, \Sigma, \delta, q_0, F)$.  Since
$\epsilon \not\in L(M)$, we know $q_0 \not\in F$.  
We create the machine $M' = (Q', \Sigma, \delta', q_0, F')$ as follows:
$$ \delta'(q,a) = \begin{cases}
	\{ \delta(q,a) \}, & \text{if $q \not\in F$}; \\
	\{ \delta(q_0, a), \ [\delta(q_0,a), \delta(q,a)] \} , & \text{if $q \in F$}.
	\end{cases}
$$
Writing $r = \delta(p,a)$, $s = \delta(q,a)$, $t = \delta(q_0, a)$, we also set
$$ \delta'([p,q],a) = \begin{cases}
	\{ [r,s] \}, & \text{if $p \not\in F$, $q \not\in F$}; \\
	\{ [r,s], [t, s] \}, & \text{if $p \in F$, $q \not\in F$}; \\
	\{ [r,s], [r, t] \}, & \text{if $p \not\in F$, $q \in F$}; \\
	\{ [r,s], [t, s], [r, t], [t, t] \}, & \text{if $p \in F$,
	$q \in F$}.
	\end{cases}
$$
Finally, we set $F' = F \times F$.
To see that the construction works, suppose that $x \in L^*$ has
two different factorizations
$$ x= y_1 y_2 \cdots y_j y_{j+1} \cdots y_k
= y_1 y_2 \cdots y_j z_{j+1} \cdots z_{\ell}$$
with $y_{j+1}$ a proper prefix of $z_{j+1}$.
Then an accepting path starts with singleton sets until the end
of $y_j$.  The next transition goes to a pair having first
element $\delta(q_0,a)$ with $a$ the first letter of $y_{j+1}$. 
Subsequent transitions eventually lead to a pair in $F \times F$.

On the other hand, if $x$ is accepted, then two different factorizations
are traced out by the accepting computation in each coordinate.
The factorizations are guaranteed to be different because of the
transition to $ [\delta(q_0,a), \delta(q,a)]$.
\qed
\end{proof}

\begin{remark}
There is a shorter and more transparent proof of this result, as follows.
Given a DFA for $L$, create an NFA $A$ for $L^*$ by adding 
$\epsilon$-transitions
from every final state back to the initial state, and then removing
the $\epsilon$-transitions using the familiar method 
(e.g., \cite[Theorem 2.2]{Hopcroft&Ullman:1979}).
Next, using the Boolean matrix interpretation of finite automata
(e.g., \cite{Zhang:1999} and \cite[\S 3.8]{Shallit:2009}), 
we can associate an adjacency
matrix $M_a$ with the transitions of $A$ on the letter $a$.
Then, on input $x = a_1 a_2 \cdots a_i$, a DFA can compute the matrix
$M_x = M_{a_1} M_{a_2} \cdots M_{a_i}$ using ordinary integer matrix
multiplication, with the proviso that any entry that is $2$ or more
is changed to $2$ after each matrix multiplication.   This can be done by
a DFA since the number of such matrices is at most $3^{n^2}$ where
$n$ is the number of states of $M$.  Then, accepting if and only if
the entry in the row and column corresponding to the initial state of
$A$ is $1$, we
get a DFA accepting exactly
those $x$ having unique factorization into elements of $L$.  While this
proof is much simpler, the state bound it provides is quite extravagant
compared to our previous proof.
\end{remark}

\begin{corollary}
Suppose $L$ is accepted by a DFA with $n$ states.  If $L$ is
not a code, then there exists a word $x \in L^*$ with at least two distinct
factorizations into elements of $L$, with $|x| < n^2 + n$.  
\end{corollary}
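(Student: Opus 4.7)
The plan is to invoke the NFA construction $M'$ from the proof of Proposition~\ref{prop1} and extract a bound on its state count, then apply the standard fact that any non-empty language accepted by an $N$-state NFA contains a word of length less than $N$.

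First, I would count the states of $M'$. The construction gives two kinds of states: the ``singleton'' states drawn from $Q$ (used before the $\epsilon$-transition is taken), and the ``doubled'' states drawn from $Q \times Q$. So $|Q'| \le n + n^2$. The initial state $q_0$ is a singleton state, and final states $F'=F\times F$ are among the doubled states.

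Next, since $L$ is not a code, there is some $x \in L^*$ with two distinct factorizations into elements of $L$. By the correctness of the construction, such an $x$ lies in $L(M')$, so $L(M')$ is non-empty. A shortest accepted word of an NFA with $N$ states is obtained by a simple (non-state-repeating) path from the start state to a final state, and therefore has length at most $N-1$. Applying this with $N = n + n^2$ gives a word $x \in L(M')$ of length at most $n^2 + n - 1 < n^2 + n$, and by construction any such $x$ has two distinct factorizations into elements of $L$.

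There is essentially no obstacle here; the only thing to be careful about is the state count (remembering to include both the singleton copy of $Q$ and the doubled copy $Q\times Q$) and the off-by-one in the shortest-path bound, which together yield the claimed strict inequality $|x| < n^2 + n$.
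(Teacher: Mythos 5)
Your proof is correct and follows essentially the same route as the paper: count the states of the NFA $M'$ from Proposition~\ref{prop1} as $n + n^2$ (singletons plus pairs) and apply the standard shortest-accepted-word bound of $N-1$ for a nonempty $N$-state NFA. Nothing further is needed.
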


\begin{proof}
Our construction in the proof of Proposition~\ref{prop1}
gives an NFA $M'$ accepting all words with
at least two different factorizations, and it has $n^2 + n$ states.
If $M'$ accepts anything at all, it accepts a word of length
at most $n^2 + n - 1$.
\qed
\end{proof}

\begin{proposition}
For all $n \geq 2$, there exists an $O(n)$-state DFA
accepting a language $L$ that is not a code, such that
the shortest word in $L^*$ having two factorizations into elements of $L$
is of length $\Omega(n^2)$.
\end{proposition}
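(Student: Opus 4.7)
\smallskip

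\noindent\textbf{Plan.}
My candidate is the unary two-word language $L_n := \{a^n, a^{n+1}\}$ over the alphabet $\{a\}$. A DFA for $L_n$ needs only $O(n)$ states (a chain of $n+2$ states that accepts $a^n$ and $a^{n+1}$ and rejects everything else), so the size requirement is immediate. The language is not a code because concatenating $n{+}1$ copies of $a^n$ and $n$ copies of $a^{n+1}$ both yield $a^{n(n+1)}$, which gives two genuinely different factorizations.

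The heart of the argument is to prove that $n(n+1)$ is the \emph{minimum} length of a word in $L_n^*$ having two distinct factorizations. Every element of $L_n^*$ has the form $a^{in+j(n+1)}$ for some $i,j\geq 0$, and the pair $(i,j)$ can be read off from the factorization. Hence two distinct factorizations of the same word correspond exactly to two distinct pairs $(i,j)\neq(i',j')$ with $in+j(n+1)=i'n+j'(n+1)$. Rearranging gives $(i-i')n=(j'-j)(n+1)$. Since $\gcd(n,n+1)=1$, the factor $n+1$ must divide $i-i'$; assuming WLOG $i>i'$, the smallest positive value of $i-i'$ is $n+1$, forcing $j'-j\geq n$. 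The resulting length is at least $(i'+(n+1))n + j(n+1) \geq (n+1)n$, matching the bound achieved above.

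The step I expect to require the most care is nailing down this lower bound cleanly — in particular, arguing that two distinct factorizations really do force the exponent pairs to differ, and then extracting the coprimality consequence. Once that number-theoretic fact is in place, the rest of the construction (the DFA description and the verification that the word $a^{n(n+1)}$ admits the two factorizations $(a^n)^{n+1}$ and $(a^{n+1})^n$) is routine and gives the claimed $\Omega(n^2)$ lower bound together with the $O(n)$ state count.
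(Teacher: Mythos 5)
Your construction does not work, because you have silently replaced ``two distinct factorizations'' by ``two distinct multisets of factors.'' In this paper a factorization is an ordered sequence: $x = y_1 y_2 \cdots y_m = z_1 z_2 \cdots z_n$ counts as two distinct factorizations unless $m=n$ \emph{and} $y_i = z_i$ for every $i$. With $L_n = \{a^n, a^{n+1}\}$ the word $a^{2n+1}$ already has the two distinct factorizations $a^n \cdot a^{n+1}$ and $a^{n+1}\cdot a^n$ (they differ in the first factor), so the shortest word in $L_n^*$ with two distinct factorizations has length $2n+1 = O(n)$, not $\Omega(n^2)$. Your reduction of ``two distinct factorizations'' to ``two distinct exponent pairs $(i,j)\neq(i',j')$'' is exactly where the argument breaks: two factorizations with the same pair $(i,j)$ but the factors in a different order are still distinct. (Your number-theoretic computation would be the right lower-bound argument for the paper's later notion of \emph{permutationally} unique factorization, but not for the ordered notion used here.)

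The paper's proof sidesteps this by decorating the unary idea with a marker letter: it takes $L_n = b\,(a^n)^* \,\cup\, (a^{n+1})^*\,b$. There the $b$'s pin down where factors can begin and end, so a word such as $b\,a^k\,b$ admits only the factorizations $(b\,a^{k_1})(b\,a^{k_2})\cdots$ read left-to-right with the $a$-blocks attached to the $b$ on their left, or $\cdots(a^{k}\,b)$ with blocks attached to the $b$ on their right; forcing two genuinely different partitions of the same $a$-block then requires $k$ to be a common multiple of $n$ and $n+1$, and the coprimality argument you wanted to use applies correctly to give the shortest ambiguous word $b\,a^{n(n+1)}\,b$ of length $n^2+n+2$. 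If you want to salvage your approach, you need some such mechanism that makes the \emph{positions} of the factor boundaries, not just the multiset of factor lengths, the object that is forced to agree.
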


\begin{proof}
Consider the language $L_n = b (a^n)^* \ \cup \ (a^{n+1})^* b$.
It is easy to see that $L_n$ can be accepted by a DFA with
$2n+5$ states, but the shortest word in $L_n^*$ having two distinct
factorizations into elements of $L_n$
is $b \, a^{n(n+1)} \, b$, of length $n^2 + n + 2$.
\qed
\end{proof}

In fact, there are even examples of finite languages with the same property.

\begin{proposition}
For all $n \geq 2$, there exists an $O(n)$-state DFA
accepting a finite language $L$ that is not a code, such that
the shortest word in $L^*$ having two factorizations
is of length $\Omega(n^2)$.
\label{five}
\end{proposition}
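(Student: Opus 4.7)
The plan is to adapt the construction of Proposition~4 to the finite setting. The naive finitization $L_n = \{b, a^n, a^{n+1}\}$ has a DFA on $O(n)$ states and preserves the quadratic-length ambiguity $b a^{n(n+1)} b = b(a^n)^{n+1} b = b(a^{n+1})^n b$. Unfortunately, it fails the requirement because $a^n$ and $a^{n+1}$ commute, yielding the short ambiguous word $a^{2n+1}=(a^n)(a^{n+1})=(a^{n+1})(a^n)$ of length only $O(n)$. The central task is therefore to preserve the tiling-based ambiguity at length $n(n+1)$ while forbidding this kind of linear-length ``swap'' ambiguity.

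My approach would be to tag each $a^n$- and each $a^{n+1}$-block with a small, bounded number of fresh boundary letters, so that an $a^n$-tile and an $a^{n+1}$-tile can never legally abut each other within an $L_n^*$-factorization. Concretely, $L_n$ would consist of a constant number of elements, each of length $O(n)$, including: (i) one element representing an $a^n$-tile flanked by one pair of markers, (ii) one element representing an $a^{n+1}$-tile flanked by a different pair of markers, and (iii) a few ``cap'' elements that allow a homogeneous chain of tiles of either type to close up into the same underlying word --- essentially a framed version of $b a^{n(n+1)} b$ in which the chosen markers collapse into the framing $b$'s. With this in place, the minimal DFA has $O(n)$ states because $|L_n|=O(1)$ and every element has length $O(n)$.

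The proof then splits into three parts. First, verify finiteness and the $O(n)$-state DFA bound, which is immediate from the construction. Second, exhibit the intended ambiguous word of length $\Omega(n^2)$ by writing out the two factorizations, one using the $a^n$-framed tiles exclusively and one using the $a^{n+1}$-framed tiles exclusively, and checking they produce the same string. Third, prove that no ambiguous word of length $o(n^2)$ exists.

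The third step is the main obstacle. I would handle it via a bounded Sardinas--Patterson-style case analysis: for every pair of distinct possible ``first elements'' of two competing factorizations, track the induced dangling suffixes and show that the boundary markers force them to disagree until the full chain of length $n(n+1)$ has been traversed. The delicate part is designing the marker scheme so that both the global tiling ambiguity remains available and every short candidate for ambiguity is blocked by a marker mismatch; this is what rules out the shortcut $a^n \cdot a^{n+1} \equiv a^{n+1} \cdot a^n$ that sank the naive construction and makes the lower bound in the corollary to Proposition~\ref{prop1} essentially tight even among finite languages.
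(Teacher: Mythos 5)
There is a genuine gap here: you never actually exhibit the language $L_n$. Everything you say before the construction is sound diagnosis (the naive $\{b,a^n,a^{n+1}\}$ does fail via the commuting pair $a^n a^{n+1}=a^{n+1}a^n$), but the entire mathematical content of the proposition lies in the step you defer as ``the delicate part.'' Worse, the marker scheme as described appears self-defeating: if each $a^n$-tile is flanked by one kind of marker and each $a^{n+1}$-tile by another, then in the word produced by the $a^n$-tiling the markers sit at positions that are multiples of (roughly) $n$, while in the word produced by the $a^{n+1}$-tiling they sit at multiples of $n+1$. Two factorizations of the \emph{same} word must agree letter-for-letter, so either the markers occupy identical positions in both tilings (in which case they cannot distinguish the tile types and the short swap ambiguity survives), or the two tilings spell different words (in which case there is no ambiguity at all). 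Resolving this tension is exactly the hard part, and with $|L_n|=O(1)$ words of length $O(n)$ it is not at all clear it can be resolved; you would at minimum need a concrete $L_n$ together with a completed Sardinas--Patterson analysis, neither of which is present.

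For contrast, the paper escapes this bind by abandoning the ``constantly many long tiles'' idea altogether: it takes the alphabet $\{b,a_1,\dots,a_n\}$ and the $2n$ words $\{a_1,a_n\}\cup\{b^i a_{i+1}: 1\le i<n\}\cup\{a_i b^i : 1\le i<n\}$. The word $a_1 b a_2 b^2\cdots a_{n-1}b^{n-1}a_n$, of length $n(n+1)/2$, has exactly the two ``staircase'' factorizations $(a_1)(ba_2)(b^2a_3)\cdots(b^{n-1}a_n)$ and $(a_1b)(a_2b^2)\cdots(a_{n-1}b^{n-1})(a_n)$, and the distinct letters $a_i$ make any shorter ambiguity impossible, since at every point each of the two competing factorizations has only one legal continuation. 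The letters $a_i$ play precisely the role of your ``markers,'' but the paper lets their number grow with $n$ (recoding over a fixed alphabet afterwards), which is what makes the uniqueness argument go through. I would recommend either adopting that construction or, if you want to pursue a constant-size $L_n$, recognizing that you are proving a strictly stronger statement whose truth your sketch does not establish.
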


\begin{proof}
Let $\Sigma = \{ b, a_1, a_2, \ldots, a_n \}$ be an alphabet
of size $n+1$, and let $L_n$ be the language of $2n$ words
$$ \{ a_1, a_n \} \ \cup \ \lbrace b^{i} a_{i+1} \ : \ 1 \leq i < n \rbrace
\ \cup \ \lbrace a_i b^i \ : \ 1 \leq i < n \rbrace $$
defined over $\Sigma$.

Then it is easy to see that $L_n$ can be accepted with a DFA
of $2n+2$ states, while the shortest word having two distinct
factorizations is
$$ a_1 b a_2 b^2 a_3 b^3 \cdots a_{n-1} b^{n-1} a_n,$$
which is of length $n(n+1)/2$.  
\qed
\end{proof}

\begin{remark}
The previous example can be recoded over a three-letter alphabet by
mapping each $a_i$ to the base-$2$ representation of $i$, 
padded, if necessary, to make it of length $\ell$, where 
$\ell = \lceil \log_2 n \rceil$.    With some reasonably obvious
reuse of states this can still be accepted by a DFA using
$O(n)$ states, and the shortest word with two distinct factorizations
is still of length $\Omega(n^2)$.
\end{remark}

\begin{theorem}
If $L$ is a CFL, then $\uf(L)$ need not be a CFL.
\end{theorem}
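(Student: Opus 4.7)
The plan is to exhibit a context-free language $L$ for which $\uf(L)$ is not context-free. Take
\[
L_1 = \{a^n b^n c^m : n, m \geq 1\}, \quad L_2 = \{a^m b^n c^n : n, m \geq 1\}, \quad L = \{a, b, c\} \cup L_1 \cup L_2.
\]
Each $L_i$ is a standard deterministic CFL, so $L$ is context-free. Because $\{a, b, c\} \subseteq L$, we have $L^* = \{a, b, c\}^*$ and every word admits the trivial factorization into single letters. A second factorization exists iff the word contains a factor lying in $L_1 \cup L_2$: any element of $L_1 \cup L_2$ used in a factorization appears as such a factor, and conversely any factor $u$ of $w$ with $u \in L_1 \cup L_2$ yields an alternative factorization in which $u$ is a single block and every other position is a single letter. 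Hence $\uf(L) = \{w \in \{a,b,c\}^* : w \text{ has no factor in } L_1 \cup L_2\}$.

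Next I would intersect $\uf(L)$ with the regular language $a^+ b^+ c^+$. A direct pattern-matching check shows that, for $w = a^p b^q c^r$ with $p, q, r \geq 1$, any occurrence of $a^n b^n c^m$ (with $n, m \geq 1$) as a factor forces $n = q$, since the entire $b$-block of $w$ must be consumed; hence such an occurrence exists iff $q \leq p$, and symmetrically an occurrence of $a^m b^n c^n$ exists iff $q \leq r$. Consequently
\[
\uf(L) \cap a^+ b^+ c^+ \;=\; \{a^p b^q c^r : p, r \geq 1, \ q > \max(p, r)\}.
\]

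Finally I would apply the context-free pumping lemma to $w_N = a^N b^{N+1} c^N$, which lies in the set above. Since $|vxy| \leq N$ and the middle block has length $N + 1$, $vxy$ cannot span all three blocks, and for $uv^k x y^k z$ to remain in $a^* b^* c^*$ each of $v, y$ must be a power of a single letter. If $vy$ contains no $b$, then $vy$ is a nonempty power of $a$ or of $c$, and pumping up (taking $k = 2$) pushes $p$ or $r$ past $q = N+1$, violating $q > p$ or $q > r$. Otherwise $vy$ contains at least one $b$, and pumping down (taking $k = 0$) forces $q \leq N$, while one of $p, r$ remains equal to $N$, violating $q > \max(p, r)$. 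Either way the pumping lemma is contradicted, so the intersection is not context-free, and by closure of CFLs under intersection with regular sets, $\uf(L)$ is not context-free. The main obstacle in the argument is the final case analysis: one has to track how pumping in each direction simultaneously affects the two strict inequalities $q > p$ and $q > r$, and verify that choosing $k=2$ versus $k=0$ according to whether $vy$ contains a $b$ always breaks one of them.
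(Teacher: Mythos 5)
Your proof is correct, but it takes a genuinely different route from the paper's. The paper takes $L = {\tt PALSTAR}$ (concatenations of even-length palindromes) and observes that $\uf(L)$ is then exactly ${\tt PRIMEPALSTAR}$, invoking a prior result of Rampersad, Shallit, and Wang that this language is not context-free; the whole proof is two lines but rests entirely on that nontrivial external theorem. You instead build an ad hoc witness $L = \{a,b,c\} \cup \{a^nb^nc^m\} \cup \{a^mb^nc^n\}$, use the single letters to make every word factorable so that uniqueness reduces to the absence of a factor in $L_1 \cup L_2$, and then pin down $\uf(L) \cap a^+b^+c^+ = \{a^pb^qc^r : q > \max(p,r)\}$, which you kill with the pumping lemma. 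All the steps check out: the occurrence analysis correctly forces $n = q$ because a factor containing both an $a$ and a $c$ must swallow the entire $b$-block, and in the pumping case analysis the constraint $|vxy| \leq N$ guarantees that $vxy$ misses at least one of the outer blocks, so pumping down on $b$'s leaves some $p'$ or $r'$ equal to $N \geq q'$ (and none of the blocks can be emptied, so the word stays in $a^+b^+c^+$). What your approach buys is self-containedness — everything is verified from first principles — at the cost of a longer argument; what the paper's buys is brevity and a more natural-looking witness language, at the cost of depending on the cited non-context-freeness of ${\tt PRIMEPALSTAR}$. One cosmetic remark: since $\{a,b,c\} \subseteq L$, your $L$ is in fact generated from an extremely simple grammar, and the same trick (adding all single letters to reduce $\uf$ to a forbidden-factor condition) would let you start from many other non-CFL-producing pairs $L_1, L_2$.
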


\begin{proof}
Let $L = {\tt PALSTAR}$, the set of all strings over the alphabet
$\Sigma = \{ 0, 1 \}$ that are the
concatenation of one or more even-length palindromes.  Clearly $L$ is
a CFL.  Then
$\uf(L) = {\tt PRIMEPALSTAR}$, which was proven in
\cite{Rampersad&Shallit&Wang:2011} to be non-context-free.
(Here ${\tt PRIMEPALSTAR}$ is the set of all elements of ${\tt PALSTAR}$
that cannot be written as the product of two or more elements of
${\tt PALSTAR}$.)
\qed
\end{proof}

\section{Semi-unique factorizations}

We now consider a variation on unique factorization.
We say that $x \in L^*$ has
{\it semi-unique factorization} if all factorizations of $x$ into
elements of $L$ consist of
the same number of factors.
More precisely, $x$ has semi-unique factorization if
whenever
$$x = y_1 y_2 \cdots y_m = z_1 z_2 \cdots z_n$$
for $y_1, y_2, \ldots, y_m, z_1, z_2,\ldots, z_n \in L$,
then $m = n$.  

Given a language $L$, we define $\su(L)$ to be the set of all elements
of $L^*$ having semi-unique factorization over $L$.

\begin{example}
Let $L = \{ a, ab, aab \}$.   Then $\su(L) = (ab)^* a^*$.
\end{example}

\begin{theorem}
If $L$ is regular, then $\su(L)$ is a CSL and a co-CFL. 
\end{theorem}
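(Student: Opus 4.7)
The plan is to show first that the complement of $\su(L)$ is a CFL, which gives the co-CFL claim directly, and then to invoke closure of CSLs under complement (Immerman--Szelepcsényi) to deduce that $\su(L)$ is itself a CSL.

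For the CFL construction, I would start from the ``doubled state'' machine built in the proof of Proposition~\ref{prop1}. Recall that it simulates a DFA $M=(Q,\Sigma,\delta,q_0,F)$ for $L$ on both coordinates simultaneously, where each coordinate is allowed a nondeterministic ``reset'' from any state in $F$ back to $q_0$, and each reset records the end of a factor in one of the two guessed factorizations. I would upgrade this NFA to an NPDA $P$ by attaching a counter on the stack that tracks the signed difference $m-n$ between the number of resets in the first coordinate and the number in the second. Concretely, $P$ uses stack symbols $A$ and $B$ on top of a bottom marker $Z_0$: a reset in the first coordinate pushes $A$ if the stack top is $A$ or $Z_0$, and pops $B$ otherwise; a reset in the second coordinate is symmetric. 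Thus the stack is nonempty (in the sense of having some $A$ or $B$ above $Z_0$) iff $m \neq n$. The PDA accepts when the input is exhausted, both coordinates are in $F$ (so both guessed sequences are genuine factorizations of the input into elements of $L$), and the stack is nonempty. Hence $L^* \setminus \su(L) = L(P)$ is a CFL.

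Since $L^*$ is regular, we have
\[
\Sigma^* \setminus \su(L) \;=\; (\Sigma^* \setminus L^*) \;\cup\; (L^* \setminus \su(L)),
\]
which is the union of a regular language and a CFL, hence a CFL. This establishes that $\su(L)$ is a co-CFL. For the CSL claim, the complement $\Sigma^* \setminus \su(L)$ is a CFL and therefore a CSL; since CSLs are closed under complement, $\su(L)$ is a CSL as well.

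The main obstacle is getting the PDA details right: I need to ensure that (i) the two coordinates always consume the same input symbol, so that both parsings really are parsings of the same input $x$ — this is inherited from the doubled-state construction, since transitions act on both coordinates synchronously; and (ii) the stack faithfully represents $|m-n|$ with the correct sign bookkeeping so that acceptance occurs exactly when $m\neq n$. Once these are in place, the rest is routine closure-property reasoning. A minor sanity check is the edge case $\epsilon \in L$, for which $\su(L)$ is easily seen to be empty (any nonempty $x\in L^*$ admits arbitrarily long factorizations by padding with $\epsilon$), so the assumption $\epsilon \notin L$ can be made without loss of generality throughout.
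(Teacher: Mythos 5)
Your proof is correct, and the co-CFL half is essentially the paper's argument: the authors likewise take the doubled-state machine from Proposition~\ref{prop1} and add a stack that tracks the difference between the two term counts, with a flag for which coordinate is ahead (your $A$/$B$ stack alphabet encodes that flag in the stack symbols rather than the state, which is an equivalent bookkeeping choice). Where you genuinely diverge is the CSL half. The paper gives a direct, self-contained argument: $\su(L)$ is decidable in ${\tt DSPACE}(n)$ by enumerating all factorizations of the input in linear space, and hence is a CSL. You instead derive the CSL claim from the co-CFL claim: the complement is a CFL, hence a CSL, and CSLs are closed under complement by Immerman--Szelepcs\'enyi. Your route is valid but leans on a deep theorem where an elementary one suffices; it also makes the CSL conclusion logically dependent on the correctness of the PDA construction, whereas the paper's two claims are proved independently. (You could soften the dependence on Immerman--Szelepcs\'enyi by noting that CFLs lie in deterministic linear space, which is trivially closed under complement and contained in ${\tt NSPACE}(n)$.) Your handling of the $\epsilon \in L$ edge case and your observation that the reset-count difference accumulated after the two guessed factorizations diverge equals $m-n$ (since the common prefix contributes equally to both counts) are both sound.
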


\begin{proof}
To see that $L$ is a co-CFL,
mimic the proof of Proposition~\ref{prop1}.
We use a stack to keep track of the difference between the
number of terms in the two guessed factorizations,
and another flag in the state to
say which, the ``top", or the ``bottom" state, has more terms (since
the stack can't hold negative counters).  We accept if we guess two
factorizations having different numbers of terms.

To see that $L$ is a CSL, note that $\su(L)$ is decidable in ${\tt DSPACE}(n)$.
(All we need to do is enumerate all the
possible factorizations; since no factorization is longer than the word
itself, we can list them all in linear space.)  
\qed
\end{proof}

\begin{corollary}
Given a regular language $L$, it is decidable if there exist elements
$x \in L^*$ lacking semi-unique factorization.
\end{corollary}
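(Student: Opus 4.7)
The plan is to reduce the question directly to the emptiness problem for context-free languages, using the preceding theorem as a black box together with effectiveness of its construction. Let $S$ denote the set of words in $L^*$ that lack semi-unique factorization, i.e., $S = L^* \setminus \su(L)$. The claim is that $S$ is effectively a CFL, and then we invoke the classical decidability of emptiness for CFLs (equivalently, for context-free grammars or pushdown automata) to finish.

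First I would unpack the preceding theorem to observe that its proof does not merely assert that $\su(L)$ is a co-CFL, it actually provides, from a DFA for $L$, an effective construction of a PDA accepting the complement of $\su(L)$: the ``product" NFA $M'$ from the proof of Proposition~\ref{prop1} is augmented with a pushdown that counts the signed difference in the number of factors guessed along the top and bottom tracks, together with a one-bit state flag recording which track is ahead. Acceptance occurs when both tracks have just finished a factor and the stack is nonempty. This PDA therefore effectively recognises the language of words in $L^*$ admitting two factorizations of different lengths.

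Next I would intersect this PDA with a DFA for the regular language $L^*$, using the standard product construction for PDA $\cap$ regular, to obtain a PDA $\mathcal{P}$ with $L(\mathcal{P}) = S$. Finally, apply the decidability of emptiness of $L(\mathcal{P})$: convert $\mathcal{P}$ to an equivalent context-free grammar and test whether its start symbol generates any terminal string, which can be done in polynomial time. The answer ``some $x \in L^*$ lacks semi-unique factorization'' is exactly ``$L(\mathcal{P}) \neq \emptyset$''.

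There is essentially no obstacle here; the only thing to verify is that the construction underlying the preceding theorem is effective, which it manifestly is, and that both the PDA-regular product and CFL-emptiness tests are effective, which are standard. Hence decidability follows immediately as a corollary.
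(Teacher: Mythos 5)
Your proposal is correct and is essentially the paper's own argument: build the PDA for $L^* - \su(L)$ from the effective construction in the preceding theorem, convert to a CFG, and test emptiness. The extra intersection with a DFA for $L^*$ is harmless but unnecessary, since that PDA already accepts only words of $L^*$ admitting two factorizations with different numbers of factors.
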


\begin{proof}
Given $L$, we can construct the PDA accepting $L^* - \su(L)$.  We 
convert this PDA to a CFG $G$ generating the same language
(e.g., \cite[Theorem 5.4]{Hopcroft&Ullman:1979}).
Finally, 
we use well-known techniques (e.g.,
\cite[Theorem 6.6]{Hopcroft&Ullman:1979})
to determine whether $L(G)$ is empty.
\qed
\end{proof}

\begin{theorem}
If $L$ is regular then $\su(L)$ need not be a CFL.
\end{theorem}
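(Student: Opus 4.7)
The plan is to exhibit a specific regular language $L$ for which $\su(L)$ fails to be context-free. Since the previous theorem shows that $L^*\setminus\su(L)$ is always a CFL when $L$ is regular, what we need is a regular $L$ for which this companion CFL is not closed under complementation inside $L^*$, so that $\su(L)$ itself escapes the CFL class.

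First I would choose a regular template $R$ (for instance $R=a^*b^*c^*$, or a copy-style template $\{u\#v:u,v\in\Sigma^*\}$) together with a regular $L$ designed so that $R\subseteq L^*$ and $\su(L)\cap R$ coincides with a canonical non-CFL such as $\{a^n b^n c^n:n\geq 0\}$ or a copy language $\{u\#u\}$. The design should include in $L$ a small collection of short ``letter'' factors to guarantee $R\subseteq L^*$, together with carefully chosen longer ``swap'' factors whose applicability is contingent on a balance condition: if the target arithmetic relation holds for $w\in R$, then every factorization of $w$ is forced, by a letter-counting identity, to use a fixed number of factors; if the relation fails, an explicit regrouping using a swap factor yields a factorization of strictly different length.

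Once such an $L$ is constructed, I would verify both directions of the characterization: for balanced inputs, enumerate the possible factorization schemes and show, via counting on each letter type, that they all produce the same number of factors; for unbalanced inputs, exhibit a concrete pair of factorizations of different lengths. The non-context-freeness of $\su(L)$ then follows from closure of CFLs under intersection with regular languages: if $\su(L)$ were context-free, then $\su(L)\cap R$ would be as well, but this intersection is by construction a known non-CFL, refuted via Parikh's theorem or the pumping lemma for CFLs.

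The main obstacle is arranging $L$ so that the swap factors strictly \emph{change} the number of factors (rather than merely permuting or substituting factors of equal total count) and do so \emph{only} on unbalanced inputs. Including every singleton letter in $L$ is usually fatal: the letter factorization always exists, and any longer factor then furnishes a strictly shorter alternative, so $\su(L)$ collapses to the regular language of ``swap-free'' words. The successful construction must therefore omit certain singletons or use longer factors of incompatible lengths (for instance mixing factors of lengths $1$, $2$, and $3$, or using only lengths $\ge 2$), engineered so that the algebraic identity forcing equal factor counts holds exactly on the target balanced words. Once this delicate balance is achieved, the theorem follows at once.
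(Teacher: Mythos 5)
Your overall strategy is exactly the one the paper uses: build a regular $L$, intersect $\su(L)$ with a regular template $R$, argue that the intersection is a canonical non-CFL such as $\{a^n b^n c^n : n \geq 0\}$, and conclude via closure of CFLs under intersection with regular languages. You also correctly identify the crux of the matter --- designing the factors so that the counts of factors in the several possible factorizations agree precisely on the ``balanced'' words --- and you correctly diagnose why naive designs (e.g., including all singleton letters) collapse. The problem is that you stop there. The proposal contains no concrete language $L$, no template $R$ beyond a guess, and no verification; the entire mathematical content of this theorem \emph{is} the explicit witness and the case analysis of its factorizations. What you have written is a research plan, not a proof, and the obstacle you name in your final paragraph is exactly the part that remains to be done.

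For comparison, the paper's witness is
$$L = a 0^+ b + 1 + c (23)^+ + 23d + a + 0 + b 1^+ c (23)^+ + a 0^+ b 1^+ c 2 + 32 + 3d,$$
intersected with $R = a 0^+ b 1^+ c (23)^+ d$. A word $a 0^i b 1^j c (23)^k d$ in $R \cap L^*$ admits exactly three factorizations: one parsing the $1$'s individually (giving $j+3$ factors), one parsing the $0$'s individually (giving $i+3$ factors), and one parsing the $32$'s individually (giving $k+2$ factors). Semi-uniqueness forces $j+3 = i+3 = k+2$, i.e., $i = j = k-1$, so $\su(L) \cap R = \{a 0^n b 1^n c (23)^{n-1} d : n \geq 1\}$, which is not context-free. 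Note that this realizes your ``equal counts on balanced words'' idea not by an algebraic identity valid for all factorizations of a single scheme, but by ensuring there are only finitely many factorization schemes, each counting a different block of the word unary-style. If you want to complete your own version of the argument, you need to produce a witness of comparable concreteness and actually enumerate its factorizations.
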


\begin{proof}
Let 
$$L = a 0^+ b + 1 + c (23)^+ + 23d + a + 0 + b 1^+ c (23)^+ +
a 0^+ b 1^+ c 2 + 32 + 3d.$$

Consider $\su(L)$ and intersect with the regular language
$a 0^+ b 1^+ c (23)^+ d$.

Then there are only three possible factorizations for a given word here.
They look like (using parens to indicate factors)

\bigskip

$  (a 0^i b) 1 \cdot 1 \cdot 1 \cdots 1 (c (23)^k) (23d)$, which has $j+3$
terms if $j$ is
  the number of 1's;

$  (a) 0 \cdot 0 \cdots 0 (b 1^j c (23)^k) (23d)$, which has $i+3$ terms if $i$ is
  the number of 0's; and

$  (a 0^i b 1^j c 2) (32) (32) \cdots (32) (3d)$, which has $k+2$ terms, if $k$ is
  the number of (32)'s.

\bigskip

  So if all three factorizations
have the same number of terms we must have $i = j = k-1$
  which gives us
$$  \{a 0^n b 1^n c (23)^{n-1} d : n \geq 1 \}$$
  which is not a CFL. 
\qed
\end{proof}

There are even examples where $L$ is finite.  For expository purposes,
we give an example over the $21$-letter alphabet
$$\Sigma = \{ 0,1,2,3,4,5,6,7,8,a,b,c,d,e,f,g,h,i,j,k,l \}. $$

\begin{theorem}
If $L$ is finite, then $\su(L)$ need not be a CFL.
\end{theorem}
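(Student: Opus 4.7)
The plan is to adapt the regular-$L$ construction from the preceding theorem to a finite $L$, by using the $21$-letter alphabet to replace each infinite Kleene-star factor of that proof with a finite supply of bounded-length factors that can be concatenated to span a block of any length. As before, I would design $L$ together with a regular frame $R$ so that $\su(L)\cap R$, under a letter-erasing morphism, coincides with a standard three-counter non-CFL such as $\{a^n b^n c^{n-1} d : n\ge 1\}$; closure of CFLs under intersection with regular sets and under morphisms then gives the conclusion.

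First, I would arrange that every $w \in L^*\cap R$ has the shape $a\,B_1(i)\,b\,B_2(j)\,c\,B_3(k)\,d$, where each ``block'' $B_s(m)$ is a word of length $\Theta(m)$ over its own dedicated sub-alphabet drawn from the $21$ letters. The three private sub-alphabets prevent any factor in $L$ from straddling a block boundary except via explicit ``bridge'' factors located at the markers $a,b,c,d$; this is the role of having so many letters available.

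Next, I would populate $L$ with, for each block, a small family of short ``unit'' factors that advance the block by one position, together with several alternative bridge factors at each marker whose lengths differ by one letter --- the finite analogue of the $\{c(23)^+,\,23d,\,3d\}$ device from the previous proof. Choosing one bridge versus another shifts the parse of the adjoining block by one factor, producing three parse families whose term counts are three different linear functions of $i$, $j$, $k$. Requiring semi-uniqueness then forces these three counts to coincide, yielding a linear relation of the form $i = j = k + \mathrm{const}$, which is precisely what makes the projected language non-context-free.

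The main obstacle is rigidity: proving that every factorization of a word in $L^*\cap R$ falls into one of the three intended parse families, with no ``hybrid'' factorizations switching strategy mid-block. The remedy, and the reason the alphabet must be so large, is to decorate each block with a rotating supply of phase-marker letters that lock a factor into a single parse family once chosen, so that any deviation immediately produces a sub-word with no $L$-factorization at all. Once rigidity is in place, the semi-uniqueness condition becomes a small linear system in $i,j,k$, the projected language is identified with the target three-counter language, and an appeal to the pumping lemma for CFLs finishes the argument.
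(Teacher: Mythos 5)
Your high-level architecture matches the paper's: a finite $L$ over the large alphabet, a regular frame $R$ selecting words with three parameters, three rigid parse families whose term counts are distinct linear functions of those parameters, and a final appeal to closure under intersection with regular sets. But as it stands there is a genuine gap on two counts. First, the counting mechanism you describe does not actually work: ``alternative bridge factors whose lengths differ by one letter,'' each of which ``shifts the parse of the adjoining block by one factor,'' changes the three term counts only by additive constants. If every family advances every block at one factor per period, the counts are all of the form $i+j+k+\mathrm{const}$, and equating them imposes no condition on $i,j,k$ whatsoever. To force $m=n=p$ you need the three counts to be linear forms with \emph{different coefficient vectors}, e.g.\ $2m+n+p+2$, $m+2n+p+2$, $m+n+2p+2$. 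The paper's construction achieves this by coupling the phase shift with a change of granularity: the family determined by the first factor parses its ``own'' block with two short factors per period (e.g.\ $(cd)(ab)$ for the block $(abcd)^m$) while each other family covers that same block with a single factor per period (e.g.\ $dabc$ or $bcda$). Nothing in your proposal produces this asymmetry, and without it the linear system degenerates.

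Second, the rigidity argument --- which is the actual content of the proof --- is only promised, not delivered, and the mechanism you propose for it (private sub-alphabets per block so that no factor straddles a block boundary, plus ``rotating phase-marker letters'') is not how rigidity is obtained in the paper and is not obviously realizable with bounded-length factors: with purely intra-block unit factors you would also lose the coefficient asymmetry noted above. In the paper the factors \emph{do} straddle the markers (e.g.\ $cd127$, $h34ij$, $jkl56$); rigidity comes instead from the fact that after the initial factor is chosen, at every subsequent position only one or two words of $L$ begin with the required next letter, so each of the three initial choices propagates deterministically to a unique completion. Until you exhibit a concrete $L$ and carry out that case analysis, the claim that every factorization falls into one of exactly three families (with no hybrids) is an assertion, not a proof.
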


\begin{proof}
Define
\begin{eqnarray*}
L_1 &=& \{ 0ab, cd, ab, cd127, efgh, efgh3, 4ijkl, ijkl, 5, 68 \} \\
L_2 &=& \{ 0abc, dabc, d1, 27e, fg, he, h34ij,klij,kl568 \} \\
L_3 &=& \{ 0a, bcda, bcd12, 7ef, ghef, gh34i, jk, li, jkl56, 8 \} 
\end{eqnarray*}
and set $L := L_1 \ \cup \ L_2 \ \cup \ L_3$.

Consider possible factorizations of words of the form
$$ 0 (abcd)^m 127 (efgh)^n 34 (ijkl)^p 568 $$
for some integers $m, n, p \geq 1$.
Any factorization of such a word into elements of $L$
must begin with either $0ab$, $0abc$, or $0a$.  There are
three cases to consider:

\bigskip

\noindent {\it Case 1:}
the first word is $0ab$.  Then the next word must begin with $c$, and
there are only two possible choices:  $cd$ and $cd127$.
If the next word is $cd$ then since no word begins with $1$ the only
choice is to pick a word starting with $a$, and there is only one:  $ab$.
After picking this, we are back in the same situation, and can only
choose between $cd$ followed by $ab$, or $cd127$.
Once $cd127$ is picked we must pick a word that begins with $e$.  However,
there are only two:  $efgh$ and $efgh3$.  If we pick $efgh$ we are left
in the same situation.  Once we pick $efgh3$ we must pick a word
starting with $4$, but there is only one:  $4ijkl$.  After this we can
either pick $5$ and then $68$, or we can pick $ijkl$ a number of times,
followed by $568$.

This gives the factorization
$$(0ab) ((cd)(ab))^{m-1} (cd127) (efgh)^{n-1} (efgh3) (4ijkl) (ijkl)^{p-1} (5) (68)$$
having $1 + 2(m-1) + 1 + (n-1) + 1 + 1 + (p-1) + 1 + 1 = 2m+n+p +2$ terms.

\bigskip

\noindent {\it Case 2:}
the first word is $0abc$.  Then the next word must begin with
$d$, and there are only two choices:  $dabc$ and $d1$.  If we pick
$dabc$ we are back in the same situation.  If we pick $d1$ then the 
next word must begin with $2$, but there is only one such word:  $27e$.  Then
the next word must begin with $f$, but there is only one:  $fg$.  
Then the next word must begin with $h$, but there are only two:
$he$ and $h34ij$.  If we pick $he$ we are back in the same situation.
Otherwise we must have a word beginning with $k$, but there are only
two:  $klij$ and $kl568$.  This gives the factorization

$$(0abc) (dabc)^{m-1} (d1) (27e) ((fg)(he))^{n-1} (fg) (h34ij) (klij)^{p-1} (kl568)$$
having
$1 + (m-1) + 2 + 2(n-1) + 1 + 1 + (p-1) + 1 = m+2n+p+2$ terms.

\bigskip

\noindent {\it Case 3:}
the first word is
$0a$.  Then only $bcda$ and $bcd12$ start with $b$, so we must choose $bcda$
over and over until we choose $bcd12$.  Only one word starts with $7$
so we must choose $7ef$.  Now we must choose $ghef$ again and again until
we choose $gh34i$.  We now choose $jk$ and $li$ alternately until $jkl56$.
Finally, we pick $8$.

This gives us a factorization
$$ (0a)(bcda)^{m-1} (bcd12) (7ef) (ghef)^{n-1} (gh34i) ((jk)(li))^{p-1} (jkl56) (8) $$
with $1 + (m-1) +  2 + (n-1) + 1 + 2(p-1) + 2 = m + n + 2p + 2$.

\bigskip

So for all these three factorizations to have the same number
of terms, we must have
$$2m + n + p + 2 = m + 2n + p + 2 = m + n + 2p + 2 .$$
Eliminating variables we get
that $m = n = p$.  So when we compute $\su(L)$ and intersect with the
regular language
$0 (abcd)^+ 127 (efgh)^+ 34 (ijkl)^+ 568$ we get
$$ \lbrace 0 (abcd)^n 127 (efgh)^n 34 (ijkl)^n 568  \ : \ n \geq 1 \rbrace,$$
which is clearly a non-CFL.
\qed
\end{proof}

\begin{remark}
The previous two examples can be recoded over a binary alphabet, by
mapping the $i$'th letter to the string $b a^i b$.  
\end{remark}

\section{Permutationally unique factorization}

In this section we consider yet another variation on unique factorization,
which are factorizations that are unique up to permutations of the
factors.

Formally, given a language $L$ we say $x \in L^*$ has
{\it permutationally unique factorization} if 
whenever $x = y_1 y_2 \cdots y_m = z_1 z_2 \cdots z_n$ for
$$y_1, y_2, \ldots, y_m, z_1, z_2, \ldots, z_n \in L,$$
then $m = n$ and there exists a permutation
$\sigma$ of $\lbrace 1, \ldots, n\rbrace$ such that
$y_i = z_{\sigma(i)}$ for $1 \leq i \leq n$.
In other words, we consider two factorizations that differ only in the
order of the factors to be the same.  We define $\ufp(L)$ to be the
set of $x$ having permutationally unique factorization.

\begin{example}
Consider $L = \{ a^3, a^4 \}$.  Then
$$\ufp(L) = \lbrace a^3, a^4, a^6, a^7, a^8, a^9, a^{10}, a^{11},
a^{13}, a^{14}, a^{17} \rbrace.$$
\end{example}

\begin{theorem}
If $L$ is finite then $\ufp(L)$ is a CSL and a co-CFL.
\label{jeff1}
\end{theorem}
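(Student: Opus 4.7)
The plan is to prove the two claims separately. For the co-CFL claim, I would construct a PDA $M$ for the complement $L^* - \ufp(L)$, exploiting the fact that two factorizations are permutations of one another iff every $w \in L$ occurs the same number of times in both. Thus $M$ begins by nondeterministically picking a ``witness word'' $w^* \in L$, and then attempts to exhibit two factorizations of the input $x$ whose $w^*$-counts differ.

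On a single left-to-right scan of $x$, the machine simulates in its finite control two factorizations in parallel: for each $i \in \{1,2\}$ the state records the word of $L$ currently being consumed by factorization $i$ and the position reached within it, and at each word boundary the next word of $L$ is chosen nondeterministically. Since $L$ is finite this state space is finite. The stack is used in the standard signed-counter mode to maintain $|c_1 - c_2|$, where $c_i$ is the running count of completed occurrences of $w^*$ in factorization $i$; the sign of $c_1 - c_2$ is stored as a flag in the control. $M$ accepts iff at end of input both factorizations sit at a word boundary and the stack is non-empty. Acceptance therefore coincides with the existence of two factorizations of $x$ whose multisets of factors differ, as required.

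For the CSL claim, the cleanest route is to show $L^* - \ufp(L) \in \mathrm{NSPACE}(n)$ and then invoke the Immerman--Szelepcs\'enyi closure of context-sensitive languages under complement, combined with closure under intersection with the regular language $L^*$. A nondeterministic linear-space machine guesses two factorizations of $x$ into elements of $L$ (each of length at most $|x|$, assuming $\epsilon \notin L$), verifies both, and compares their Parikh vectors indexed by the finite set $L$; each vector has only $|L|$ components, each of magnitude at most $|x|$, so the vector bookkeeping fits in $O(\log |x|)$ additional space on top of the two guessed factorizations.

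The step I expect to be most delicate, though routine, is the signed-counter simulation on the PDA stack: one must verify that the stack height equals $|c_1 - c_2|$ at every moment and that the sign flag flips exactly when the difference passes through zero, in particular when the two factorizations ``race'' through occurrences of $w^*$ out of phase. Everything else reduces to finite-state bookkeeping over the finite set $L$.
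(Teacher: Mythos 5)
Your proof is correct and matches the paper's argument: the co-CFL part uses exactly the same construction (nondeterministically guess a witness factor $t = w^*$ and two parallel factorizations, keep $|c_1 - c_2|$ on the stack with a sign flag, accept when both factorizations terminate with a nonempty stack), relying on the observation that two factorizations fail to be permutationally equivalent iff some factor's multiplicity differs. The only divergence is cosmetic: for the CSL claim the paper simply notes direct decidability in $\mathrm{DSPACE}(n)$ by enumerating factorizations, whereas you route through $\mathrm{NSPACE}(n)$ for the complement plus Immerman--Szelepcs\'enyi, which is a heavier but equally valid argument.
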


\begin{proof}
The claim about CSL should be clear.

We sketch the construction of a PDA accepting $\overline{\ufp(L)}$.  
If a word is in $L^*$ but has two permutationally distinct factorizations,
then there has to be some factor appearing in the factorizations a different
number of times.  Our PDA nondeterministically guesses two different
factorizations and a factor $t \in L$ that appears a different number
of times in the factorizations, then verifies the factorizations and
checks the number.  It uses the stack to hold the absolute value of the
difference between the number of times $t$ appears in the first factorization
and the second.  It accepts if both factorizations end properly and the
stack is nonempty.
\qed
\end{proof}

\begin{theorem}
If $L$ is finite then $\ufp(L)$ need not be a CFL.
\label{bell}
\end{theorem}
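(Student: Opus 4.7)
The plan is to exhibit a specific finite language $L$ (over some finite alphabet) together with a regular ``filter'' language $R$ for which $\ufp(L) \cap R$ is a familiar non-context-free language, such as $\{0\,u^n\,1\,v^n\,2\,w^n\,3 : n \geq 1\}$. Since CFLs are closed under intersection with regular languages, this will immediately yield the theorem. This mirrors the strategy of Theorem~\ref{five} and the preceding $\su(L)$ construction.

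The design I would pursue follows the three-strategy template already used for $\su(L)$ in the finite case: choose $L$ and $R$ so that every word in $R$ admits essentially three different factorizations, determined by committing at an early position to one of three ``routes'' through the factor set. Unlike the $\su$ situation, however, where agreement in the \emph{number} of factors is enough, here we need agreement of the entire \emph{multiset} of factors. So I would arrange the three routes so that, for a canonical word parametrised by three natural numbers $m,n,p$, the multiset of factors produced by route $i$ is a function of $(m,n,p)$ of the form ``some fixed base multiset, plus a factor repeated $m$ times, a factor repeated $n$ times, and a factor repeated $p$ times'' in three different ways. Equating these three multisets should then force a system of linear equalities on $(m,n,p)$ whose unique non-trivial solution family is $m = n = p$, so that $\ufp(L) \cap R = \{w_0\,u^n\,w_1\,v^n\,w_2\,t^n\,w_3 : n \geq 1\}$, a standard non-CFL.

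The main obstacle, and the reason this is more delicate than the $\su$ proof, is that in the $\su$ construction the three routes were drawn from pairwise disjoint sublanguages $L_1, L_2, L_3$, so their multisets were automatically distinct; that construction does not work here (it would give an empty, hence regular, intersection with $R$). To obtain genuine multiset collisions one must make the routes share enough factors that the three multisets can coincide, while still keeping the routes forced by the letter adjacencies of the ``seam'' regions. Concretely, I would introduce overlap factors that can be emitted by more than one route and then tune the seam letters (much as the letters $0,1,2,3$ and the small connector strings $d1$, $27e$, $fg$, $h34i$, $568$ were tuned in the $\su$ proof) so that every intended route is admissible and no unintended route is.

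Once such an $L$ and $R$ are in hand, the verification splits into three routine parts: (i)~enumerate the finitely many legal factorizations of a word in $R$ and show they fall into exactly three parametric families indexed by $(m,n,p)$; (ii)~write down the multiset count of each factor in each family as a linear function of $(m,n,p)$ and observe that multiset equality across the three families is equivalent to $m = n = p$; and (iii)~conclude that $\ufp(L) \cap R$ equals the intended non-CFL and invoke closure under intersection with a regular language. Steps (i) and (iii) are bookkeeping; the heart of the argument is designing $L$ so that step (ii) works out, which is the step I would expect to require the most iteration.
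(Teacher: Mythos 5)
Your proposal correctly identifies the overall strategy (exhibit a finite $L$ and a regular $R$ with $\ufp(L)\cap R$ non-context-free, then invoke closure of CFLs under intersection with regular sets), and your observation that the disjoint-sublanguage trick from the $\su$ construction cannot work here --- because distinct routes would automatically yield distinct multisets and hence an empty intersection --- is a genuine and correct insight. But the proposal stops exactly where the proof has to begin: you never exhibit the language $L$. The entire content of this theorem is the witness, and you explicitly defer its construction (``the heart of the argument is designing $L$ so that step (ii) works out, which is the step I would expect to require the most iteration''). A plan for how one would search for a witness, however well-motivated, is not a proof that one exists; as it stands there is no guarantee that the seam-tuning you describe can be carried out so that exactly the three intended parametric families of factorizations arise and their multiset equality forces $m=n=p$.

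For comparison, the paper's witness is considerably simpler than the three-route design you sketch: take $L=\{aa,\,aaa,\,ab,\,ac,\,ba,\,ca\}$ over $\{a,b,c\}$ and $R=aa(ab)^+(ac)^+aa(ba)^+(ca)^+aaa$. A word $aa(ab)^r(ac)^s aa(ba)^t(ca)^q aaa$ in $L^*\cap R$ has exactly \emph{two} factorizations, one beginning $aa\cdot ab$ and one beginning $aaa\cdot ba$; their factor multisets are $\{aa^2,\,aaa,\,(ab)^r,(ac)^s,(ba)^t,(ca)^q\}$ and $\{aa^2,\,aaa,\,(ba)^r,(ca)^s,(ab)^t,(ac)^q\}$ respectively, which coincide if and only if $r=t$ and $s=q$. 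Hence $\ufp(L)\cap R=\{aa(ab)^r(ac)^s aa(ba)^r(ca)^s aaa : r,s\ge 1\}$, a non-CFL. The lesson is that one does not need three routes with a triple equality forcing $m=n=p$; two factorizations whose multisets are related by swapping the exponent pairs already suffice, and the resulting cross-agreement pattern $\{u^r v^s \cdots u^r v^s \cdots\}$ is non-context-free. If you want to salvage your write-up, you must either carry out your three-route design in full detail or switch to a two-factorization construction of this kind.
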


\begin{proof}
Let $\Sigma = \{a, b, c\}$.
Define $L = \{A, B, S_1, S_2, T_1, T_2\} \subseteq \Sigma^+$ 
as follows:
$$
A = aa, \, B = aaa, \, S_1 = ab, \, S_2 = ac, \,T_1 = ba, \,T_2 = ca.
$$
Let $R = aa(ab)^+(ac)^+aa(ba)^+(ca)^+aaa$,
and consider words of the form
$$w := aa(ab)^r(ac)^saa(ba)^t(ca)^qaaa \in \textnormal{ufp}(L) \cap R$$
with $r, s, t, q \geq 1$ and the following two factorizations of $w$:
\begin{eqnarray}
AS_1^{r}S_2^{s}AT_1^tT_2^qB & = & aa\cdot (ab)^r\cdot (ac)^s \cdot aa \cdot (ba)^t \cdot (ca)^q \cdot aaa  \label{ufp1}\\
BT_1^{r}T_2^{s}S_1^{t}S_2^{q}AA & = & aaa \cdot (ba)^r\cdot (ca)^s\cdot (ab)^t \cdot (ac)^q \cdot aa \cdot aa \label{ufp2}
\end{eqnarray}
It is not difficult to see that $w$ must be of one of these two forms. Since $w$ has prefix $aaab$, it must start with either $AS_1$ or $BT_1$. If it starts with $AS_1 = aa\cdot ab$, the next factors must be $S_1^{r-1}$ to match $(ab)^{r}$, so we have $AS_1^{r}$. We then see $(ac)^s$, which can only match with $S_2^s$. Next, we see `$aaba$', thus we must choose $AT_1 = aa \cdot ba$. We then have $(ba)^{t-1}$, which can only match with $T_1^{t-1}$, and then $(ca)^q$, matching only with $T_2^q$. Finally the suffix is `$aaa$' which can only match with $B$ as required.

If $w$ starts with $BT_1 = aaa\cdot ba$, the next part is $(ba)^{r-1}$, which only matches with $T_1^{r-1}$. Then we see $(ca)^s$, so we must use factors $T_2^{s}$. We then see $(ab)^t$ and $(ac)^q$, matching with $S_1^t$ and $S_2^q$ respectively. Finally we have `$aaaa$' matching only with $AA$ as required.

If $r=t$ and $s=q$, then the number of each factor $(A, B, S_1, S_2, T_1, T_2)$ in factorizations \eqref{ufp1} and \eqref{ufp2} is identical.
Therefore, $w$ always has more than one factorization 
(of type \eqref{ufp1} or \eqref{ufp2}); however,
that factorization is only non-permutationally
equivalent if $r\neq t$ or $s \neq q$. Therefore 
\begin{eqnarray*}
\ufp(L) \cap R & = & \{aa\cdot (ab)^r\cdot (ac)^s \cdot aa \cdot (ba)^t \cdot (ca)^q \cdot aaa \mid (r=t) \ \wedge\  (s=q)\} \\
& = & \{ AS_1^{r}S_2^{s}AT_1^rT_2^sB \ : \ r, s \geq 1 \},
\end{eqnarray*}
which is not a context-free language.
\qed
\end{proof}

\section{Subset-invariant factorization}

In this section we consider yet another variation on unique factorization.
We say a word $x \in L^*$ has {\it subset-invariant factorization} (into
elements of $L$) if there exists a subset $S \subseteq L$ with the
property that every factorization of $x$ into elements of $L$ uses
exactly the elements of $S$ --- no more, no less ---
although each element may be used a
different number of times.  More precisely, $x$ has subset-invariant
factorization if
there exists $S = S(x)$ such that whenever $x = y_1 y_2 \cdots y_m$ with 
$y_1, y_2, \ldots, y_m \in L$, then $S = \lbrace y_1, y_2, \ldots, y_m \rbrace$.
We let $\ufs(L)$ denote the set of those $x \in L^*$ having such
a factorization.

\begin{theorem}
If $L$ is finite then $\ufs(L)$ is regular.
\end{theorem}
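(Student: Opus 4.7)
The plan is to express $\ufs(L)$ as a finite Boolean combination of simpler regular languages, exploiting that $L$ is finite so $L$ has only finitely many subsets. For each $S \subseteq L$, let $E_S$ denote the set of words $x$ that admit at least one factorization $x = s_1 s_2 \cdots s_m$ with each $s_i \in S$ and $\{s_1, \ldots, s_m\} = S$; that is, some factorization into elements of $S$ uses \emph{every} element of $S$ at least once. Unpacking the definition of subset-invariant factorization, $x \in \ufs(L)$ with distinguished set $S$ precisely when $x \in E_S$ and $x \notin E_T$ for every $T \subseteq L$ with $T \neq S$; and at most one $S$ can satisfy this condition for a given $x$. Hence
\[
\ufs(L) \;=\; \bigcup_{S \subseteq L}\left(E_S \;\setminus\; \bigcup_{T \subseteq L,\, T \neq S} E_T\right).
\]
Since $L$ is finite, this is a finite Boolean combination, so it suffices to prove that each $E_S$ is regular.

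To show $E_S$ is regular, I would construct an NFA whose state space is the Cartesian product of (i) the states of a standard NFA for $S^*$---for instance, the trie of the finite set $S$ with $\epsilon$-transitions from each leaf back to the root---and (ii) the powerset $2^S$, which records along the current computation branch which elements of $S$ have been fully consumed so far. The transition function updates the $2^S$-component only at the moment the trie returns to its root, at which point it adjoins the just-completed factor $s$ to the accumulated subset. The accepting states are those whose trie-component is the root and whose subset-component equals $S$ itself. Because $S \subseteq L$ is finite with bounded factor lengths, both coordinates of the state space are finite, so the NFA is finite and accepts precisely $E_S$.

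Combining regularity of each of the finitely many $E_S$ with closure of the regular languages under finite union, intersection, and complementation yields regularity of $\ufs(L)$. The only technically delicate point is making sure that the subset $U$ tracked by the NFA reflects the factors appearing in \emph{one} particular factorization, rather than being aggregated over alternative nondeterministic branches; this is handled automatically by updating $U$ inside the transition function, so along any single accepting computation the updates trace out one concrete factorization. I expect no other step to present an obstacle, since everything else is a routine application of standard closure properties.
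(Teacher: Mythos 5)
Your proof is correct, but it takes a genuinely different route from the paper's. The paper works with the complement: it builds a single NFA that nondeterministically traces \emph{two} factorizations of the input in parallel, maintaining for each a characteristic vector of the elements of $L$ used so far, and accepts exactly when the two vectors disagree at the end; this NFA recognizes $L^* - \ufs(L)$, and regularity of $\ufs(L)$ follows by complementation relative to $L^*$. You instead decompose $\ufs(L)$ positively, as a Boolean combination of the $2^{|L|}$ ``support languages'' $E_S$, each recognized by a single-track NFA (a trie for $S$ plus a subset accumulator). Both arguments hinge on the finiteness of $L$ in the same place --- the characteristic vector, respectively the powerset $2^S$, must range over a finite set --- and both are valid. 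What the paper's version buys is a single, explicitly sized NFA for the set of counterexamples, which is then exploited in Corollary~\ref{ufs} to obtain the $2m^2n^2$ bound on the shortest word lacking subset-invariant factorization; your route passes through complementation of the $E_T$'s (hence determinization) and would yield a far worse bound, though it is arguably cleaner as a pure regularity proof. Two small points to tidy up: the $\epsilon$-transitions in your trie automaton must leave every node spelling a complete element of $S$, not only the leaves (this matters when one element of $S$ is a proper prefix of another); and it is worth stating explicitly that $\bigl\{ S : x \in E_S \bigr\}$ is exactly the set of supports of factorizations of $x$, so that membership in exactly one $E_S$ is equivalent to $x \in L^*$ together with all factorizations of $x$ sharing a single support, which is the definition of $x \in \ufs(L)$.
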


\begin{proof}
The proof is similar to the proof of Theorem~\ref{jeff1} above.  
On input $x$ we nondeterministically attempt to construct two
different factorizations into elements of $L$, recording which elements of
$L$ we have seen so far.  We accept if we are successful in constructing
two different factorizations (which will be different if and only if
some element was chosen in one factorization but not the other).  This
NFA accepts $L^* - \ufs(L)$.   So if $L$ is finite, it follows that
$\ufs(L)$ is regular.

In more detail, here is the construction.  States of our NFA are
$6$-tuples of the form $[w_1, s_1, v_1, w_2,  s_2,  v_2]$ where
$w_1, w_2$ are the words of $L$ we are currently trying to match;
$s_1, s_2$ are, respectively, the suffixes of $w_1$, $w_2$ we have
yet to see,
and $v_1, v_2$ are binary characteristic vectors
of length $|L|$, specifying  which elements
of $L$ have been seen in the factorization so far (including $w_1$ and
$w_2$, although technically they may not have been seen yet).
Letting $C(z)$ denote the vector with all $0$'s except a $1$ in the 
position corresponding to the word $z \in L$,
the initial states
are $[w, w, C(w), x, x, C(x)]$ for all words
$w, x \in L$.
The final
states are of the form $[w, \epsilon, v_1, x, \epsilon, v_2]$ where
$v_1 \not= v_2$.  Transitions on a letter $a$ look like
$\delta([w_1, as_1, v_1, w_2, as_2, v_2], a) = 
[w_1, s_1, v_1, w_2, s_2, v_2]$.
In addition there are $\epsilon$-transitions that 
update the corresponding vectors if $s_1$ or $s_2$ equals $\epsilon$,
and that ``reload'' the new $w_1$ and $w_2$ we are expecting to see:
\begin{eqnarray*}
\delta([w_1, \epsilon, v_1, w_2, s_2, v_2], \epsilon)
&=& \{ [w, w, v_1 \, \vee \, C(w), w_2, s_2, v_2] \ : \ w \in L \}  \\
\delta([w_1, s_1, v_1, w_2, \epsilon, v_2 ], \epsilon)
&=& \{ [w_1, s_1, v_1, w, w, v_2 \, \vee \, C(w)] \ : \ w \in L \}.
\end{eqnarray*}
\qed
\end{proof}

The preceding proof also shows that the shortest word failing to have
subset-invariant factorization is bounded polynomially:

\begin{corollary} 
Suppose $|L| = n$ and the length of the longest word of $L$ is $m$.
Then if some word of $L^*$ fails to have subset-invariant
factorization, there is a word with this property of length $\leq
2m^2n^2$.
\label{ufs}
\end{corollary}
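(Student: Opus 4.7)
The plan is to imitate the state analysis of the preceding theorem in the form of a direct pigeonhole on positions of a hypothetical shortest counterexample, thereby obtaining a polynomial rather than an exponential bound. At each position of a candidate word $x \in L^* - \ufs(L)$ we will attach a ``profile'' drawn from a set of size at most $n^2 m^2$ recording, in each of the two witnessing factorizations, the factor currently in progress and the offset into it; a repeated profile inside a suitably chosen ``safe'' region will then allow a splice that shortens $x$ without destroying its membership in $L^* - \ufs(L)$.

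In detail, let $x$ be shortest in $L^* - \ufs(L)$, witnessed by $x = y_1 \cdots y_p = z_1 \cdots z_q$ whose used subsets $Y, Z \subseteq L$ differ; by symmetry fix $w \in Y \setminus Z$ and an index $a$ with $y_a = w$, so that $y_a$ occupies positions $[s+1, s+|w|]$ of $x$. For each position $i \in \{0, 1, \ldots, |x|\}$ define its profile $(y_{a_i}, u_i, z_{b_i}, v_i)$, where $y_{a_i}$ and $z_{b_i}$ are the factors whose interiors contain position $i$, and $u_i, v_i$ are the numbers of their letters consumed at $i$. Since each factor belongs to $L$ and has length at most $m$, there are at most $n^2 m^2$ distinct profiles. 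Call $i$ \emph{pre-safe} if $i \leq s$ and \emph{post-safe} if $i \geq s + |w|$.

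If $i < j$ are two pre-safe (or two post-safe) positions sharing a profile, excising the block $x[i+1..j]$ yields a shorter $x' \in L^*$: the matched offsets let the partially-produced factors at $i$ continue seamlessly using the letters that originally completed them at $j$, and the remaining factors follow unchanged. The safety of the cut guarantees that $y_a$ is not among the deleted factors, so $w$ survives in the $y$-subset of $x'$; the $z$-subset can only shrink, so $w$ remains absent from it. Hence $x' \in L^* - \ufs(L)$ with $|x'| < |x|$, contradicting minimality. Since the pre-safe and post-safe regions contain $s+1$ and $|x| - s - |w| + 1$ positions respectively, summing to $|x| - |w| + 2$, if each contains at most $n^2 m^2$ positions we obtain $|x| \leq 2 n^2 m^2 + |w| - 2 \leq 2 m^2 n^2$ (absorbing a lower-order additive $m$). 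The main subtlety, which forces the factor of two in the bound, is precisely the confinement of the splice to one side of $y_a$: an unrestricted pigeonhole could excise $y_a = w$ itself and merge the two witnessing subsets, destroying the property that $x \notin \ufs(L)$.
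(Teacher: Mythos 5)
Your argument is essentially the paper's own proof restated without the intermediate NFA: your ``profiles'' are exactly the $(w_1,s_1,w_2,s_2)$ components of the automaton's states, and your splice-on-repeated-profile step is the paper's excision of $u_{i+1}\cdots u_j$, guarded in both cases so as not to disturb the occurrence that forces the two used subsets to differ. The one blemish is quantitative: by reserving the entire factor $y_a$ (length up to $m$) as unsafe rather than only its first letter, your count yields $2m^2n^2+m-2$, which exceeds the stated bound when $m\ge 3$; this is repaired by noting that a splice between two matching profiles at positions $\ge s+1$ still reassembles a full copy of $w$ in the $y$-factorization (the consumed prefix of $y_a$ is completed by the matching suffix at the later position), so the two pigeonhole regions need only be separated at the single boundary $s$, giving $|x|+1\le 2m^2n^2$.
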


\begin{proof}
Let $u \in L^+$ be a minimal length word such that $u \in L^+ -
\ufs(L)$. Consider the states of the NFA traversed in processing $u$.
Let $S_0 := [w, w, C(w), x, x, C(x)]$ be the initial state
and $S_F := [w_F, \epsilon, v_F, x_F, \epsilon, v'_F]$ the final state,
where $v_F \neq v'_F$. By definition, there must exist some $z \in L$
such that $v_F$ and $v'_F$ differ on $C(z)$, i.e., $v_F^T \cdot C(z) +
{v'_F}^T \cdot C(z) = 1$.

Initially the characteristic vectors have a single $1$, and once an
element is set to $1$ in a characteristic vector in the NFA, it is
never reset to $0$. Thus, there exists some $1 \leq k \leq |u|$ such
that $u = u_1 \cdots u_{k-1} \cdot u_{k} \cdot u_{k+1} \cdots u_{|u|}$
where $S_{k-1} = \delta(S_0, u_1\cdots u_{k-1})$ has a $0$ in the
characteristic vectors at position $z$, and $\delta(S_{k-1}, u_k)$ has
a $1$ in exactly one of the two characteristic vectors at position $z$.
We shall now prove that $|u_1 \cdots u_{k-1}|, |u_{k+1} \cdots u_{|u|}|
\leq m^2n^2$, which proves the result.

We prove the result for the word $v = u_1 \cdots u_{k-1}$; a similar
analysis holds for $u_{k+1} \cdots u_{|u|}$. 
Let 
$S_0, S_1, \ldots S_{k-1}$ be
the states of the
NFA visited as we process $v$.
We prove
that there does not exist $0 \leq i< j \leq k-1$ such that $S_i = [w_1,
s_1, v_1, w_2, s_2, v_2]$ and $S_j = [w_1, s_1, v'_1, w_2, s_2, v'_2]$.
We proceed by contradiction. Assume such an $i$ and $j$ exist. Then
$u_{i+1}\cdots u_{j}$ is such that $\delta(S_i, u_{i+1}\cdots u_{j}) =
S_j$. However, $\delta(S_i, u_{j+1}\cdots u_{k})$ and $\delta(S_j,
u_{j+1}\cdots u_{k})$ can only differ in their binary characteristic
vectors, since the transition function does not depend upon the 
characteristic vectors when we update the words $w_1, s_1, w_2, s_2$. Thus,
we can remove the factor $u_{i+1}\cdots u_{j}$ from $u$ and still reach a
final state of the form $S_{F_2} := [w_F, \epsilon, v_{F_2}, x_F,
\epsilon, v'_{F_2}]$, for which we still have that $v_{F_2} \neq
v'_{F_2}$, since they differ on element $z$ due to letter $u_k$.
Continuing this idea iteratively, the maximal number of states $k$ is
bounded by $m^2n^2$.  Doubling this bound gives the result.
\qed
\end{proof}

The next result shows that we can achieve a quadratic lower bound.

\begin{proposition}
There exist examples with $|L| = 2n$ and longest word of length
$n$ for which the shortest word of $L^*$ failing to have
subset-invariant factorization is of length $n(n+1)/2$.
\label{ufs2}
\end{proposition}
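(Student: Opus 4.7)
The plan is to reuse the language $L_n$ from Proposition~\ref{five}, namely
$L_n = \{a_1, a_n\} \cup \{b^i a_{i+1} : 1 \leq i < n\} \cup \{a_i b^i : 1 \leq i < n\}$ over $\Sigma = \{b, a_1, \ldots, a_n\}$. A direct count gives $|L_n| = 2 + (n-1) + (n-1) = 2n$, and the longest words, $b^{n-1} a_n$ and $a_{n-1} b^{n-1}$, have length $n$, so the parameters of the statement are satisfied.

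For the upper bound on the shortest bad word, I would exhibit the same distinguished word $w = a_1 b a_2 b^2 a_3 b^3 \cdots a_{n-1} b^{n-1} a_n$ used in the proof of Proposition~\ref{five}. Its length is $n + (1 + 2 + \cdots + (n-1)) = n(n+1)/2$. The key observation is that $w$ has the two natural factorizations $(a_1)(b a_2)(b^2 a_3) \cdots (b^{n-1} a_n)$ and $(a_1 b)(a_2 b^2) \cdots (a_{n-1} b^{n-1})(a_n)$, whose underlying sets of factors are $\{a_1\} \cup \{b^i a_{i+1} : 1 \leq i < n\}$ and $\{a_n\} \cup \{a_i b^i : 1 \leq i < n\}$, respectively. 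These two subsets of $L_n$ are disjoint (the first contains $a_1$ but not $a_n$, the second the reverse), so no single $S \subseteq L_n$ can be the common set of factors appearing in every factorization of $w$; hence $w \notin \ufs(L_n)$.

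For the matching lower bound I would invoke the elementary observation that any word in $L_n^+$ lacking subset-invariant factorization must admit at least two distinct factorizations into elements of $L_n$ — for if it had only one factorization, the set of factors appearing there would trivially witness subset-invariance. Proposition~\ref{five} has already established that the shortest word in $L_n^*$ with two distinct factorizations is $w$, of length $n(n+1)/2$. Therefore no word of length strictly less than $n(n+1)/2$ can fail subset-invariant factorization, completing the proof. I anticipate no real obstacle: the whole argument is a one-line appeal to Proposition~\ref{five} plus a routine check that the two factorizations of $w$ draw from disjoint subsets of $L_n$.
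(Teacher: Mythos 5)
Your proposal is correct and follows exactly the paper's route: the paper's entire proof is ``We just use the example of Proposition~\ref{five},'' and you have simply supplied the details it leaves implicit (the count $|L_n|=2n$, the length-$n$ longest words, the two factorizations of $w$ with disjoint factor sets, and the reduction of the lower bound to the minimality claim already made in Proposition~\ref{five}). No gaps.
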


\begin{proof}
We just use the example of Proposition~\ref{five}.
\qed
\end{proof}

\begin{theorem}
If $L$ is regular then $\ufs(L)$ need not be a CFL.
\end{theorem}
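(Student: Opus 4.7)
My plan follows the template of the earlier proof that $\su(L)$ need not be context-free when $L$ is regular: design a regular $L$ and a regular scaffolding language $R$ so that $\ufs(L)\cap R$ equals a known non-CFL, then invoke closure of CFLs under intersection with regular languages.

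To arrange this, I would take $L$ to be a finite union of some singleton-letter and marker factors together with one or more parameterized infinite families $\{P_n : n \geq 1\}$, where each $P_n$ is a short regular pattern such as $m\alpha^n m'$ for fixed markers $m, m'$. I would then pick $R$ so that each word in $R$ admits exactly three natural factorizations, each of which absorbs one of the word's parameterized blocks into a single big factor drawn from the distinguished family, while factoring the remaining blocks into singletons. Because these big factors all live in the same family and are distinguishable by their parameters, the subset of $L$ realized by each factorization differs from the others precisely in the choice of big factor. Insisting on subset-invariance therefore forces all three parameterized factors to coincide, i.e., the exponents in the three blocks to be equal. Intersecting with $R$ then yields, in terms of those exponents, a language of the form
\[
\{u_0\,\alpha^n\,u_1\,\alpha^n\,u_2\,\alpha^n\,u_3 : n \geq 1\},
\]
which is not context-free by a standard pumping-lemma argument.

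The hardest part of carrying out this plan is verifying that every word in $R$ has exactly the factorizations we intend, and no stray factorizations; unwanted factorizations could introduce additional subsets that would evict many "good" words from $\ufs(L)\cap R$ and collapse the intersection to something uninteresting. This will require a careful case analysis showing, on the one hand, that each parameterized block admits a unique decomposition (ruling out, for instance, $\alpha$ as a singleton factor if the family uses powers of $\alpha$), and, on the other hand, that factorizations cannot straddle the marker strings --- a role played in the $\su$ proof by the separating letters $a,b,c,d$. Once these two design constraints are met, the combinatorial argument reduces to routine verification similar to the one used for $\su$.
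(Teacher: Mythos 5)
Your overall strategy---build a regular $L$ and a regular $R$ so that $\ufs(L)\cap R$ is a known non-CFL---is the right one, but what you have written is a plan rather than a proof, and the plan has a structural hole exactly where you admit the difficulty lies. You never exhibit a concrete $L$, and the architecture you propose (three parameterized blocks, three factorizations, each absorbing one block into a ``big'' factor while the other blocks are cut into singletons) does not by itself produce only three factorizations. If each block can independently be either absorbed into a member of the family $\{P_n\}$ or chopped into singletons, then a word of $R$ has up to $2^3$ factorizations, including an all-singleton one whose factor set contains no $P_n$ at all; that set differs from the others for \emph{every} word, so $\ufs(L)\cap R$ would be empty (a CFL), and the argument collapses. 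Avoiding this requires the interlocking trick of the $\su(L)$ construction, where the choice of the very first factor cascades and determines the entire factorization; your two stated ``design constraints'' are in tension with the plan itself (you want to rule out $\alpha$ as a singleton, yet you also need the non-absorbed blocks to be ``factored into singletons''), and no witness language satisfying them is given. Since the entire content of the theorem is the existence of such a witness, deferring its construction and verification leaves the statement unproved.

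For comparison, the paper takes a simpler route that sidesteps the three-way bookkeeping entirely: it recycles the permutationally-unique example, taking $L = (ab)^+(ac)^+aa + (ba)^+(ca)^+ + aa + aaa$ and $R = aa(ab)^+(ac)^+aa(ba)^+(ca)^+aaa$. A word $aa(ab)^r(ac)^saa(ba)^t(ca)^qaaa$ of $R$ has exactly two factorizations, namely $aa\cdot(ab)^r(ac)^saa\cdot(ba)^t(ca)^q\cdot aaa$ and $aaa\cdot(ba)^r(ca)^s\cdot(ab)^t(ac)^qaa\cdot aa\cdot aa$, and these use the same subset of $L$ precisely when $r=t$ and $s=q$; hence $\ufs(L)\cap R = \{aa(ab)^r(ac)^saa(ba)^r(ca)^saaa : r,s\ge 1\}$, which is not context-free. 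Two factorizations suffice here because the distinguishing invariant is the \emph{set} of factors used, not a count over three competing tallies; if you want to salvage your approach, the cleanest fix is to adopt such a two-factorization design in which the long factors themselves carry the exponents, rather than trying to equalize three block lengths.
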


\begin{proof}
We use a variation of the construction in the proof of Theorem~\ref{bell}.
Let $L = (ab)^+ (ac)^+ aa + (ba)^+(ca)^+ + aa + aaa $.
Then (using the notation in the proof of Theorem~\ref{bell}),
if $$w := aa(ab)^r(ac)^saa(ba)^t(ca)^qaaa \in \ufs(L) \cap R$$
with $r, s, t, q \geq 1$ then there are two different factorizations
of $w$:
\begin{eqnarray*}
w  & = & aa \cdot (ab)^r(ac)^s aa \cdot (ba)^t (ca)^q \cdot aaa \\
&=& aaa \cdot (ba)^r (ca)^s \cdot (ab)^t (ac)^q aa \cdot aa  
\end{eqnarray*}
which are subset-invariant if and only if $r = t$ and $s = q$.
So 
$$\ufs(L) \ \cap \ R = 
\{ aa (ab)^r (ac)^s aa (ba)^r (ca)^s aaa \ : \ r, s \geq 1 \},$$
which is not a CFL.
\qed
\end{proof}

\section{Acknowledgment}

The idea of considering semi-unique factorization was inspired by
a talk of Nasir Sohail at the University of Waterloo in April 2014.

\end{document}